\newcommand{\cmark}{\ding{51}}%
\newcommand{\bcmark}{\ding{52}}%
\newcommand{\xmark}{\ding{55}}%
\newcommand\mytextsf{\bfseries\sffamily\fontsize{9pt}{9pt}\selectfont}
\newtheorem{theorem}{Theorem}
\newtheorem{proposition}{Proposition}
\newtheorem{lemma}[theorem]{Lemma}
\newtheoremstyle{myremark}
  {\topsep}
  {\topsep}
  {\itshape}
  {0pt}
  {\scshape}
  {.}
  { }
  {}
\theoremstyle{myremark}
\newtheorem{remark}{Remark}
\let\save@mathaccent\mathaccent
\newcommand*\if@single[3]{%
  \setbox0\hbox{${\mathaccent"0362{#1}}^H$}%
  \setbox2\hbox{${\mathaccent"0362{\kern0pt#1}}^H$}%
  \ifdim\ht0=\ht2 #3\else #2\fi
  }
\newcommand*\rel@kern[1]{\kern#1\dimexpr\macc@kerna}
\newcommand*\widebar[1]{\@ifnextchar^{{\wide@bar{#1}{0}}}{\wide@bar{#1}{1}}}
\newcommand*\wide@bar[2]{\if@single{#1}{\wide@bar@{#1}{#2}{1}}{\wide@bar@{#1}{#2}{2}}}
\newcommand*\wide@bar@[3]{%
  \begingroup
  \def\mathaccent##1##2{%
    \let\mathaccent\save@mathaccent
    \if#32 \let\macc@nucleus\first@char \fi
    \setbox\z@\hbox{$\macc@style{\macc@nucleus}_{}$}%
    \setbox\tw@\hbox{$\macc@style{\macc@nucleus}{}_{}$}%
    \dimen@\wd\tw@
    \advance\dimen@-\wd\z@
    \divide\dimen@ 3
    \@tempdima\wd\tw@
    \advance\@tempdima-\scriptspace
    \divide\@tempdima 10
    \advance\dimen@-\@tempdima
    \ifdim\dimen@>\z@ \dimen@0pt\fi
    \rel@kern{0.6}\kern-\dimen@
    \if#31
      \overline{\rel@kern{-0.6}\kern\dimen@\macc@nucleus\rel@kern{0.4}\kern\dimen@}%
      \advance\dimen@0.4\dimexpr\macc@kerna
      \let\final@kern#2%
      \ifdim\dimen@<\z@ \let\final@kern1\fi
      \if\final@kern1 \kern-\dimen@\fi
    \else
      \overline{\rel@kern{-0.6}\kern\dimen@#1}%
    \fi
  }%
  \macc@depth\@ne
  \let\math@bgroup\@empty \let\math@egroup\macc@set@skewchar
  \mathsurround\z@ \frozen@everymath{\mathgroup\macc@group\relax}%
  \macc@set@skewchar\relax
  \let\mathaccentV\macc@nested@a
  \if#31
    \macc@nested@a\relax111{#1}%
  \else
    \def\gobble@till@marker##1\endmarker{}%
    \futurelet\first@char\gobble@till@marker#1\endmarker
    \ifcat\noexpand\first@char A\else
      \def\first@char{}%
    \fi
    \macc@nested@a\relax111{\first@char}%
  \fi
  \endgroup
}
\newacronym{ISAC}{ISAC}{integrated sensing and commmunications}
\newacronym{BS}{BS}{base station}
\newacronym{RF}{RF}{radio-frequency}
\newacronym{DAC}{DAC}{digital-to-analog converter}
\newacronym{IRS}{IRS}{intelligent reflecting surface}
\newacronym{PAPR}{PAPR}{peak-to-average power ratio}
\newacronym{EHF}{EHF}{extremely high frequency}
\newacronym{EHFs}{EHFs}{extremely high frequencies}
\newacronym{AWGN}{AWGN}{additive white Gaussian noise}
\newacronym{SNR}{SNR}{signal-to-noise ratio}
\newacronym{SDR}{SDR}{semidefinite relaxation}
\newacronym{SDP}{SDP}{semidefinite programming}
\newacronym{SCA}{SCA}{successive convex approximation}
\newacronym{MILP}{MILP}{mixed-integer linear program} 
\newacronym{MINLP}{MINLP}{mixed-integer nonlinear program} 
\newacronym{AoA}{AoA}{angle of arrival}
\newacronym{AoD}{AoD}{angle of departure}
\newacronym{DPG}{DPG}{directional power gain}
\newacronym{LoS}{LoS}{line-of-sight}
\newacronym{NLoS}{NLoS}{non-LoS}
\newacronym{LHS}{LHS}{left-hand-side}
\newacronym{RHS}{RHS}{right-hand-side}
\newacronym{ES}{ES}{exhaustive search}
\newacronym{BnC}{BnC}{branch-and-cut}
\begin{document}


\title{\huge Hierarchical Functionality Prioritization in Multicast ISAC: Optimal Admission Control and Discrete-Phase Beamforming}

\author{
\IEEEauthorblockN{Luis F. Abanto-Leon and 
				  Setareh Maghsudi}
}



\maketitle


\begin{abstract}
We investigate the joint admission control and discrete-phase multicast beamforming design for \gls{ISAC} systems, where sensing and communications functionalities have different hierarchies. Specifically, the \gls{ISAC} system first allocates resources to the higher-hierarchy functionality and opportunistically uses the remaining resources to support the lower-hierarchy one. This resource allocation problem is a nonconvex \gls{MINLP}. We propose an exact \gls{MILP} reformulation, leading to a globally optimal solution. In addition, we implemented three baselines for comparison, which our proposed method outperforms by more than $ 39 \% $.
\end{abstract}


\begin{IEEEkeywords}
Integrated sensing and communications, multicast, beamforming, discrete phases, admission control.
\end{IEEEkeywords}


\glsresetall
\section{Introduction} \label{sec:introduction}

\Gls{ISAC} is a disruptive advancement in wireless technology in which sensing and communications share the same radio resources, e.g., infrastructure, spectrum, waveform, to enhance radio resource utilization, reduce costs, and simplify system complexity \cite{liu2020:joint-radar-communication-design-applications-state-art-road-ahead}.

Sensing at high frequencies is appealing since the shorter wavelengths enable finer resolution \cite{mao2022:waveform-design-joint-sensing-communications-millimeter-wave-low-terahertz-bands}. These frequencies suffer severe path loss, which beamforming can alleviate. Highly versatile digital beamformers are expensive to manufacture for such high frequencies. Hence, analog beamformers lead the initial stages of \gls{ISAC} systems operating at these frequencies. 

Analog beamformers can be designed with continuous or discrete phases. The state-of-the-art literature features beamforming designs with both phase types, but most works focused on continuous phases, e.g., \cite{cao2022:pareto-optimal-analog-beamforming-design-integrated-mimo-radar-communication,  nguyen2023:multiuser-mimo-wideband-joint-communications-sensing-system-subcarrier-allocation, lyu2024:crb-minimization-ris-aided-mmwave-integrated-sensing-communications}, while a few accounted for discrete phases, e.g., \cite{xu2023:joint-antenna-selection-beamforming-integrated-automotive-radar-sensing-communications-quantized-double-phase-shifters}. The latter are of immense practical interest, as they reduce system complexity and costs. To date, however, only suboptimal beamforming designs exist for \gls{ISAC} systems that utilize discrete phases.

Another characteristic of analog beamformers is their single \gls{RF} chain, which supports one signal stream, making them well-suited for multicasting scenarios, such as broadcasting live sports or concerts to several subscribed users simultaneously. Multicast beamforming has been well investigated in non-\gls{ISAC} systems, e.g., \cite{demir2015:optimum-discrete-phase-only-multicast-beamforming-joint-antenna-user-selection-cognitive-radio-networks, abanto2023:radiorchestra-proactive-management-millimeter-wave-self-backhauled-small-cells-joint-optimization-beamforming-user-association-rate-selection-admission-control}, but rarely in \gls{ISAC} systems, with only a few studies addressing the topic, e.g., \cite{ ren2024:fundamental-crb-rate-tradeoff-multi-antenna-isac-systems-information-multicasting-multi-target-sensing}. Yet, none of such studies accounted for constant-modulus discrete phases. Particularly, multicasting and ISAC could play a key role in live events where drones are often used for aerial filming. Thus, ISAC could enable drone tracking while supporting efficient content dissemination to users.



In non-\gls{ISAC} systems, admission control is crucial in preventing resource allocation infeasibility, especially when radio resources are limited, allowing to serve only a selected subset of users \cite{abanto2023:radiorchestra-proactive-management-millimeter-wave-self-backhauled-small-cells-joint-optimization-beamforming-user-association-rate-selection-admission-control, matskani2009:efficient-batch-adaptive-approximation-algorithms-joint-multicast-beamforming-admission-control}, thereby enhancing resource utilization. In light of its advantages, incorporating admission control into \gls{ISAC} systems holds significant promise. However, despite its potential, this aspect has been overlooked in \gls{ISAC} contexts.

Moreover, angular positions of targets may not be known precisely due to factors such as motion. Thus, accounting for this aspect in the resource allocation design can help mitigate potential performance degradation in sensing, a crucial aspect explored in only a few studies, such as \cite{xu2018:robust-resource-allocation-uav-systems-uav-jittering-user-location-uncertainty}.

In \gls{ISAC} systems, one functionality may be more critical than the other \cite{cui2023:integrated-sensing-communications-background-applications}. Particularly, this view aligns with industry's pragmatic stance of preserving communication performance, while enabling sensing opportunistically when feasible. While tradeoff functions can balance the importance of functionalities by using weights \cite{balef2023:piecewise-stationary-multi-objective-multi-armed-bandit-application-joint-communications-sensing}, changes in parameter settings (e.g., number of users, transmit power) can skew objective function values, rendering preset weights ineffective and shifting the intended operating point. To address this, we propose establishing strict hierarchies through careful weight design. Our approach consistently prioritizes communications regardless of parameter settings, ensuring its full optimization before addressing the sensing requirements, thus leading to a strictly tiered resource allocation framework.

\begin{figure}[!t]
	\centering
	\includegraphics[width=0.62\columnwidth]{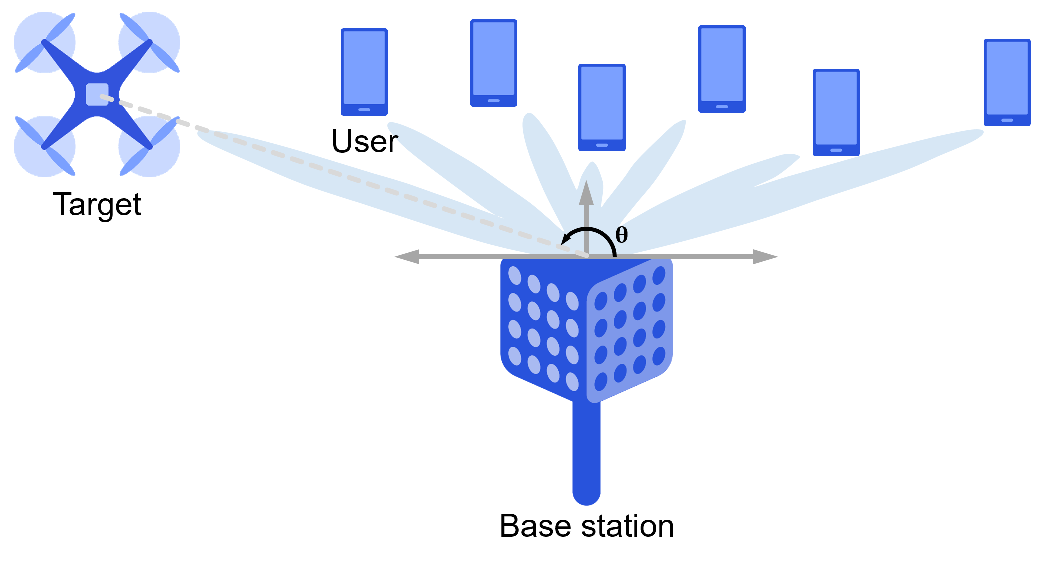}
	\vspace{-2mm}
	\caption{Multicast ISAC system with many users and a target.}	
	\label{fig:system-model}
	\vspace{-4mm}
\end{figure}


Motivated by the above discussion, we investigate the joint optimization of admission control and multicast beamforming with discrete phases for ISAC systems, prioritizing communications while enabling opportunistic sensing, and accounting for target angular uncertainty. This novel resource allocation problem, distinct from existing works (see \cref{table:related-literature}), is formulated as a nonconvex \gls{MINLP}, which is challenging to solve. We propose an approach to reformulate it, leading to a \gls{MILP} that can be solved globally optimally. Our approach employs a series of transformations to convexify the nonconvex MINLP without compromising optimality, effectively addressing the original problem's complexity. Additionally, we implement three baselines based on well-established optimization methods used in the resource allocation literature.


\emph{Notation}: Boldface capital letters $ \mathbf{A} $ and boldface lowercase letters $ \mathbf{a} $ denote matrices and vectors, respectively. The transpose, Hermitian transpose, and trace of $ \mathbf{A} $ are denoted by $ \mathbf{A}^\mathrm{T} $, $ \mathbf{A}^\mathrm{H} $, and $ \mathrm{Tr} \left(  \mathbf{A} \right) $, respectively. The $l$-th row and $i$-th column of $ \mathbf{A} $ are denoted by $ \left[ \mathbf{A} \right]_{l,:} $ and $ \left[ \mathbf{A} \right]_{:,i} $, respectively, and the $l$-th element of $ \mathbf{a} $ is denoted by $ \left[ \mathbf{a} \right]_l $. $ \mathbb{C}^{I \times J} $ and $ \mathbb{N} $ denote the space of $ I \times J $ complex-valued matrices and the natural numbers, respectively. Also, $ j \triangleq \sqrt{-1} $ is the imaginary unit, $ \mathbb{E} \left\lbrace \cdot \right\rbrace  $ denotes statistical expectation, and $ \mathcal{CN} \left( \upsilon, \xi^2 \right) $ represents the complex Gaussian distribution with mean $ \upsilon $ and variance $ \xi^2 $.



\begin{table}[!t]
	\begin{center}
	\fontsize{6.5}{6}\selectfont
	\setlength\tabcolsep{2.9pt}
	\centering
	\caption{Categorization of related work.}
	\label{table:related-literature}
	\begin{tabular}{c c c c c c c c} 
	\toprule
	\makecell{Works} & \makecell{$ \mathrm{D}_1 $}  & \makecell{$ \mathrm{D}_2 $} & \makecell{$ \mathrm{D}_3 $} &  \makecell{$ \mathrm{D}_4 $} & \makecell{$ \mathrm{D}_5 $} & \makecell{$ \mathrm{D}_6 $} & \makecell{$ \mathrm{D}_7 $} 
	\\
	\midrule
	\cite{cao2022:pareto-optimal-analog-beamforming-design-integrated-mimo-radar-communication, nguyen2023:multiuser-mimo-wideband-joint-communications-sensing-system-subcarrier-allocation, lyu2024:crb-minimization-ris-aided-mmwave-integrated-sensing-communications} & ISAC & \xmark & Continuous & Unicast & \xmark & \xmark & \xmark
	\\
	\midrule
	\cite{xu2023:joint-antenna-selection-beamforming-integrated-automotive-radar-sensing-communications-quantized-double-phase-shifters} & ISAC & \xmark & Discrete & Unicast & \xmark & \xmark & \xmark
	\\
	\midrule
	\cite{demir2015:optimum-discrete-phase-only-multicast-beamforming-joint-antenna-user-selection-cognitive-radio-networks} & Non-ISAC & \cmark & Discrete & Multicast & \xmark & \xmark & \xmark
	\\
	\midrule
	\cite{ren2024:fundamental-crb-rate-tradeoff-multi-antenna-isac-systems-information-multicasting-multi-target-sensing} & ISAC & \xmark & --- & Multicast & \xmark & \xmark & \xmark
	\\
	\midrule
	\cite{matskani2009:efficient-batch-adaptive-approximation-algorithms-joint-multicast-beamforming-admission-control} & Non-ISAC & \xmark & Continuous & Multicast & \cmark & \xmark & \xmark
	\\
	\midrule
	\cite{xu2018:robust-resource-allocation-uav-systems-uav-jittering-user-location-uncertainty} & ISAC & \xmark & Continuous & Unicast & \xmark & \cmark & \xmark
	\\
	\midrule
	\textbf{Proposed} & \textbf{ISAC} & \cmark & \textbf{Discrete} & \textbf{Multicast} & \bcmark & \bcmark & \bcmark
	\\
	\bottomrule
	\multicolumn{2}{l}{\tiny $ \mathrm{D}_1 $: System type} 
	&  
	\multicolumn{2}{l}{\tiny $ \mathrm{D}_2 $: Globally optimality}
	&  
	\multicolumn{2}{l}{\tiny $ \mathrm{D}_3 $: Phase type}
	&  
	\multicolumn{2}{l}{\tiny $ \mathrm{D}_4 $: Network topology}
	\\
	\multicolumn{2}{l}{\tiny $ \mathrm{D}_5 $: Admission control} 
	&
	\multicolumn{2}{l}{\tiny $ \mathrm{D}_6 $: Angle uncertainty} 
	&
	\multicolumn{2}{l}{\tiny $ \mathrm{D}_7 $: Hierarchical prioritization}
	\end{tabular}
	\end{center}
	\vspace{-4mm}
\end{table}


\section{System Model and Problem Formulation} \label{sec:system-model-problem-formulation}

We consider an \gls{ISAC} system comprising a \gls{BS} equipped with $ N $ transmit and $ N $ receive antennas, $ U $ single-antenna users, and one target, as shown in \cref{fig:system-model}.

\textbf{Beamforming:} The \gls{BS} transmits signal $ \mathbf{d} = \mathbf{w} z $, where $ \mathbf{w} \in \mathbb{C}^{N \times 1} $ is the multicast beamforming vector and $ z \in \mathbb{C} $ is the data symbol which serves both sensing and communication purposes simultaneously, and follows a complex Gaussian distribution with zero mean and unit variance, e.g., $ \mathbb{E} \left\lbrace z z^{*} \right\rbrace = 1 $. To account for the constant-modulus discrete phases used in the analog beamforming design, we include constraint $ {\mathrm{C}}_{1}: \left[ \mathbf{w} \right]_n \in \mathcal{S}, \forall n \in \mathcal{N} $, where $ \mathcal{N} = \left\lbrace 1, \dots, N \right\rbrace $ indexes the antenna elements and $ \mathcal{S} = \big\{ \delta e^{j \phi_1},  \dots, \delta e^{j \phi_L} \big\} $ is the set of admissible phases. In addition, $ \delta = \sqrt{P_\mathrm{tx}/N} $ is the magnitude, $ L $ is the number of phases, $ \phi_l $ is the $ l $-th phase, and $ P_\mathrm{tx} $ is the \gls{BS}'s transmit power. Furthermore, $ Q $ is the number of bits needed for encoding the $ L $ phases, i.e., $ Q = \log_2 (L) $.

\textbf{Admission control:} To decide which users are served by the \gls{BS}, we include constraint $ {\mathrm{C}}_{2}: \mu_u \in \left\lbrace 0, 1 \right\rbrace, \forall u \in \mathcal{U} $, where $ \mathcal{U} = \left\lbrace 1, \dots, U \right\rbrace $ indexes the users. Here, $ \mu_u = 1 $ indicates that user $ u $ is admitted, and $ \mu_u = 0 $ otherwise.

\textbf{Communications model:}
The signal received by user $ u $ is  $ y_{\mathrm{com},u} = \mathbf{h}^\mathrm{H}_u \mathbf{d} + \eta_{\mathrm{com},u} = \mathbf{h}^\mathrm{H}_u \mathbf{w} z + \eta_{\mathrm{com},u} $, where $ \mathbf{h}_u \in \mathbb{C}^{N \times 1} $ is the channel between the \gls{BS} and user $ u $, and $ \eta_{\mathrm{com},u} \sim \mathcal{CN} \left( 0,\sigma_\mathrm{com}^2 \right) $ is \gls{AWGN}. The communication \gls{SNR} at user $ u $ is
\begin{equation}
	\mathsf{SNR}_{\mathrm{com},u} \left( \mathbf{w} \right) = \mathbf{w}^\mathrm{H} \widetilde{\mathbf{H}}_u \mathbf{w}, \forall u \in \mathcal{U},
\end{equation}
where $ \widetilde{\mathbf{H}}_u = \frac{ \mathbf{h}_u \mathbf{h}_u^\mathrm{H} }{\sigma_\mathrm{com}^2} $. Let $ \Gamma_\mathrm{th} $ be the minimum \gls{SNR} threshold necessary for successfully decoding the multicast data. To enforce this requirement jointly with user admission, we incorporate constraint $ \mathrm{C}_{3}: \mathbf{w}^\mathrm{H} \widetilde{\mathbf{H}}_u \mathbf{w} \geq \mu_u \cdot \Gamma_\mathrm{th}, \forall u \in \mathcal{U} $, i.e., the \gls{SNR} threshold must be satisfied for all admitted users.

\textbf{Sensing model:}
We assume the target is far from the \gls{BS}, thus we model it as a single point. 
The \gls{BS} operates as a monostatic co-located radar, i.e., the \gls{AoD} and \gls{AoA} are the same. Hence, the response matrix between the BS and the target is given by $ \mathbf{G} \left( \theta \right)  = \alpha \mathbf{a} \left( \theta \right) \mathbf{a}^\mathrm{H} \left( \theta \right) $, where $ \alpha $ is the reflection coefficient, $ \theta $ is the \gls{AoD}/\gls{AoA} of the target, and $ \mathbf{a} \left( \theta \right) = \left[ e^{j \pi \frac{-N+1}{2} \cos \left( \theta \right)}, \dots, e^{j \pi \frac{N-1}{2}\cos \left( \theta \right)} \right] ^\mathrm{T}, \in \mathbb{C}^{N \times 1} $ is the half-wavelength steering vector in the direction of $ \theta $. The reflected signal by the target at the \gls{BS} is $ y_\mathrm{sen} = \mathbf{w}^\mathrm{H} \mathbf{G} \left( \theta \right) \mathbf{d} + \eta_\mathrm{sen} = \mathbf{w}^\mathrm{H} \mathbf{G} \left( \theta \right) \mathbf{w} z + \eta_\mathrm{sen} $, where $ \eta_\mathrm{sen} \sim \mathcal{CN} \left( 0,\sigma_\mathrm{sen}^2 \right) $, Thus, the sensing \gls{SNR}, measured at he \gls{BS}, is given by
\begin{equation} \label{eqn:dpg}
	\mathsf{SNR}_\mathrm{sen} \left( \mathbf{w}, \theta \right) = \mathbf{w}^\mathrm{H} \widetilde{\mathbf{G}} \left( \theta \right) \mathbf{w}.
\end{equation}
where $ \widetilde{\mathbf{G}} \left( \theta \right) = \frac{\mathbf{G} \left( \theta \right) }{\sigma_\mathrm{sen}^2} $. It is assumed that the transmit and receive antenna arrays are adequately spaced to prevent self-interference \cite{xiao2017:full-duplex-millimeter-wave-communication}. To account for potential uncertainty in the value of $ \theta $, e.g., caused by the target's speed  \cite{liu2022:integrated-sensing-communications-toward-dual-functional-wireless-networks-6g-beyond}, we adopt the model in \cite{xu2018:robust-resource-allocation-uav-systems-uav-jittering-user-location-uncertainty}, where an angular interval $ \left[ \theta - \Delta, \theta + \Delta \right] $ is considered, with $ \Delta $ representing the uncertainty in $ \theta $. This interval is discretized into samples, resulting in set $ \Theta = \left\lbrace \bar{\theta} \mid \bar{\theta} = \theta - \Delta + \tfrac{2 \Delta}{C-1} c \right\rbrace $, $ \forall c = 0, \dots, C -1 $, where $ C $ is the number of samples taken within the interval. To ensure that the sensing \gls{SNR} in all angular directions within $ \Theta $ exceeds some value $ \tau $, we first include constraint $ {\mathrm{C}}_{4}: \tau \geq 0 $  and then add constraint $ {\mathrm{C}}_{5}: \mathbf{w}^\mathrm{H} \widetilde{\mathbf{G}} \left( \theta \right) \mathbf{w} \geq \tau, \forall \theta \in \Theta $.

\textbf{Objective function:} We define the tradeoff function
\begin{equation} 
	f \left( \boldsymbol{\mu}, \tau \right) \triangleq \rho_\mathrm{com} \cdot f_\mathrm{com} \left( \boldsymbol{\mu} \right) + \rho_\mathrm{sen} \cdot f_\mathrm{sen} \left( \tau \right), 
\end{equation}
which we aim to maximize. Here, $ f_\mathrm{com} \left( \boldsymbol{\mu} \right) \triangleq \mathbf{1}^\mathrm{T} \boldsymbol{\mu} $ and $ f_\mathrm{sen} \left( \tau \right) \triangleq \tau $ are the objective functions related to communications and sensing, respectively. In particular, $ f_\mathrm{com} \left( \boldsymbol{\mu} \right) $ represents the number of admitted users, i.e., users that are served with the desired multicast data, while $ f_\mathrm{sen} \left( \tau \right) $ is the lowest sensing \gls{SNR} value for the angles in $ \Theta $. In addition, $ \boldsymbol{\mu} = \left[ \mu_1, \dots, \mu_U \right]^\mathrm{T} $, whereas $ \rho_\mathrm{com} $ and $ \rho_\mathrm{sen} $ are the weights that control the functionality importance.

\textbf{Problem formulation:}
 We formulate the joint design of admission control and discrete-phase beamforming as
\begin{align*} 
	\mathcal{P}: & \underset{\mathbf{w}, \boldsymbol{\mu}, \tau }{\mathrm{~maximize}}
	& f \left( \boldsymbol{\mu}, \tau \right) ~~~ \mathrm{s.t.} ~~~ {\mathrm{C}}_{1}, {\mathrm{C}}_{2}, {\mathrm{C}}_{3}, {\mathrm{C}}_{4}, {\mathrm{C}}_{5}.
\end{align*}

As a particular case, we consider that communications has higher hierarchy than sensing, achieved through the careful design of weights, as outlined in \cref{lem:lemma-1}. We highlight that our framework can accommodate any arbitrary weights, even when hierarchies are not required.
\begin{lemma} \label{lem:lemma-1}
	A set of weights ensuring that communications has higher hierarchy is given by $ \rho_\mathrm{com} = 1 $ and $ \rho_\mathrm{sen} = \tfrac{\sigma_\mathrm{sen}^2}{2 \alpha N P_\mathrm{tx} } $.
\end{lemma}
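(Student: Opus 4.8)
The plan is to read ``communications has higher hierarchy'' as a lexicographic priority: among all feasible points, $f(\boldsymbol{\mu},\tau)$ must rank any solution that admits more users strictly above one that admits fewer, using the sensing term $\rho_\mathrm{sen}\tau$ only to break ties among solutions with the same admitted-user count. Since $f_\mathrm{com}(\boldsymbol{\mu}) = \mathbf{1}^\mathrm{T}\boldsymbol{\mu}$ is integer-valued, with $\rho_\mathrm{com}=1$ the communications contribution changes in unit steps. Hence the hierarchy holds precisely when the entire range of the sensing contribution is strictly smaller than one admitted user, i.e. $\rho_\mathrm{sen}\,\tau_{\max} < \rho_\mathrm{com} = 1$, where $\tau_{\max}$ is the largest sensing SNR achievable under the constraints. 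The proof thus reduces to (i) upper-bounding $\tau_{\max}$ and (ii) verifying that the stated $\rho_\mathrm{sen}$ satisfies this inequality.

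For step (i), I would start from $\mathrm{C}_5$, which forces $\tau \leq \mathbf{w}^\mathrm{H}\widetilde{\mathbf{G}}(\theta)\mathbf{w}$ for every $\theta\in\Theta$, and substitute $\widetilde{\mathbf{G}}(\theta) = \alpha\,\mathbf{a}(\theta)\mathbf{a}^\mathrm{H}(\theta)/\sigma_\mathrm{sen}^2$ to get $\tau \leq \tfrac{\alpha}{\sigma_\mathrm{sen}^2}\,\lvert \mathbf{a}^\mathrm{H}(\theta)\mathbf{w}\rvert^2$. I would then apply Cauchy--Schwarz, $\lvert \mathbf{a}^\mathrm{H}(\theta)\mathbf{w}\rvert^2 \leq \lVert \mathbf{a}(\theta)\rVert^2\,\lVert \mathbf{w}\rVert^2$. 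The two norms are fixed by the model: the steering vector has unit-modulus entries, so $\lVert \mathbf{a}(\theta)\rVert^2 = N$, and the constant-modulus constraint $\mathrm{C}_1$ forces every entry of $\mathbf{w}$ to have magnitude $\delta=\sqrt{P_\mathrm{tx}/N}$, so $\lVert \mathbf{w}\rVert^2 = N\delta^2 = P_\mathrm{tx}$. Combining these gives the uniform bound $\tau \leq \tau_{\max} \leq \alpha N P_\mathrm{tx}/\sigma_\mathrm{sen}^2$, valid for every feasible $\mathbf{w}$ and independent of $\theta$.

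For step (ii), I would substitute the proposed $\rho_\mathrm{sen} = \sigma_\mathrm{sen}^2/(2\alpha N P_\mathrm{tx})$ into $\rho_\mathrm{sen}\,\tau_{\max}$, which simplifies to $\tfrac{1}{2}$, strictly below $\rho_\mathrm{com}=1$. Since $\tau\geq 0$ by $\mathrm{C}_4$, the sensing contribution lies in $[0,\tfrac{1}{2}]$ and can never offset the unit gain obtained by admitting one more user; therefore any feasible point with a larger $\mathbf{1}^\mathrm{T}\boldsymbol{\mu}$ attains a strictly larger $f$. This shows communications is fully optimized before sensing is considered, so the stated weights enforce the hierarchy, with the factor $2$ supplying a strict safety margin.

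I expect the main obstacle to be the modeling step rather than the algebra: making precise what ``higher hierarchy'' means and arguing that a single scalarized objective can reproduce a lexicographic order. The crux is recognizing that the unit granularity of $f_\mathrm{com}$ together with a uniform a priori bound on $\tau$ is exactly what lets one weight dominate the other; once this is articulated, the bound on $\tau_{\max}$ follows from Cauchy--Schwarz and the constant-modulus structure, and the final verification is immediate.
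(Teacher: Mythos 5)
Your proposal is correct and follows essentially the same route as the paper's proof: both rely on the integrality of $f_\mathrm{com}(\boldsymbol{\mu})$ with $\rho_\mathrm{com}=1$, bound $\tau$ via $\mathrm{C}_5$, the substitution $\widetilde{\mathbf{G}}(\theta)=\alpha\,\mathbf{a}(\theta)\mathbf{a}^\mathrm{H}(\theta)/\sigma_\mathrm{sen}^2$, Cauchy--Schwarz, and $\lVert\mathbf{a}(\theta)\rVert_2^2=N$, $\lVert\mathbf{w}\rVert_2^2=P_\mathrm{tx}$, and then pick $\rho_\mathrm{sen}$ so the sensing term stays strictly below one unit of the communications term. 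Your lexicographic framing merely makes explicit what the paper phrases as ``nonoverlapping intervals'' (integer part versus decimal part), so the two arguments are the same in substance.
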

\begin{proof}
	To ensure that the functionalities have different hierarchies, we choose the weights such that $ \rho_\mathrm{com} \cdot f_\mathrm{com} \left( \boldsymbol{\mu} \right) $ and $ \rho_\mathrm{sen} \cdot f_\mathrm{sen} \left( \tau \right) $ span nonoverlapping intervals. In particular, we let $ \rho_\mathrm{com} \cdot f_\mathrm{com} \left( \boldsymbol{\mu} \right) \in  \mathbb{N} $ handle the integer part of $ f \left( \boldsymbol{\mu}, \tau \right) $ and let $ \rho_\mathrm{sen} \cdot f_\mathrm{sen} \left( \tau \right) \in [ 0, 1) $ handle the decimal part, thereby effectively assigning a higher hierarchy to communications. Note that $ f_\mathrm{com} \left( \boldsymbol{\mu} \right) $ is an integer by definition. In order for $ \rho_\mathrm{com} \cdot  f_\mathrm{com} \left( \boldsymbol{\mu} \right) $ to also be an integer, we can choose any $ \rho_\mathrm{com} \in [ 1, \infty ) \cap \mathbb{N} $. For simplicity, we adopt $ \rho_\mathrm{com} = 1 $. Besides, since $ \tau $ is smaller than or equal to $ \mathbf{w}^\mathrm{H} \widetilde{\mathbf{G}} \left( \theta \right) \mathbf{w} $, $ \forall \theta \in \Theta $, as  stated in $ {\mathrm{C}}_{5} $, we can establish an upper bound for $ \tau $ by finding an upper bound for $ \mathbf{w}^\mathrm{H} \widetilde{\mathbf{G}} \left( \theta \right) \mathbf{w} $. Thus, we have that $ \mathbf{w}^\mathrm{H} \widetilde{\mathbf{G}} \left( \theta \right) \mathbf{w} = \tfrac{\alpha}{\sigma_\mathrm{sen}^2} \left| \mathbf{w}^\mathrm{H} \mathbf{a} \left( \theta \right) \right|^2 $. Applying the Cauchy-Schwarz inequality to the \gls{RHS}, we obtain that $ \tfrac{\alpha}{\sigma_\mathrm{sen}^2}  \left| \mathbf{w}^\mathrm{H} \mathbf{a} \left( \theta \right) \right|^2 \leq \tfrac{\alpha}{\sigma_\mathrm{sen}^2} \left\| \mathbf{a} \left( \theta \right) \right\|^2_2 \left\| \mathbf{w} \right\|^2_2 $. In addition, we have $ \left\| \mathbf{a} \left( \theta \right) \right\|^2_2 = N $ and $ \left\| \mathbf{w} \right\|^2_2 = P_\mathrm{tx} $, yielding $ \mathbf{w}^\mathrm{H} \widetilde{\mathbf{G}} \left( \theta \right) \mathbf{w} \leq \tfrac{\alpha  N P_\mathrm{tx}}{\sigma_\mathrm{sen}^2} $ and $ f_\mathrm{sen} \left( \tau \right) \leq \tfrac{\alpha  N P_\mathrm{tx}}{\sigma_\mathrm{sen}^2} $. In order for $ \rho_\mathrm{sen} \cdot f_\mathrm{sen} \left( \tau \right) \in [ 0, 1) $ to be true, we can choose any $ \rho_\mathrm{sen} \in \big( 0, \tfrac{\sigma_\mathrm{sen}^2}{\alpha N P_\mathrm{tx}} \big) $. For simplicity, we adopt $ \rho_\mathrm{sen} = \tfrac{\sigma_\mathrm{sen}^2}{2 \alpha N P_\mathrm{tx}} $.
\end{proof}

\section{Proposed Optimal Approach} \label{sec:proposed-approach}

We introduce constraint $ \mathrm{C}_{6}: \mathbf{W} = \mathbf{w} \mathbf{w}^\mathrm{H} $, which contains new variable $ \mathbf{W} $ that is related to $ \mathbf{w} $. This allows us to equivalently recast problem $ \mathcal{P} $ as problem $ \mathcal{P}' $, shown below,
\begin{align*} 
	\mathcal{P}': & \underset{ \mathbf{W}, \mathbf{w}, \boldsymbol{\mu}, \tau }{\mathrm{~maximize}}
	& & f \left( \boldsymbol{\mu}, \tau \right)
	\\
	& ~~~~~ \mathrm{s.t.} & & \mathrm{C}_{1}, \mathrm{C}_{2}, \mathrm{C}_{4}, 
	\\
	& & & \mathrm{C}_{3}: ~ \mathrm{Tr} \big( \widetilde{\mathbf{H}}_u \mathbf{W} \big) \geq \mu_u \cdot \Gamma_\mathrm{th}, \forall u \in \mathcal{U}, 
	\\
	& & & \mathrm{C}_{5}: ~ \mathrm{Tr} \big( \widetilde{\mathbf{G}} \left( \theta \right) \mathbf{W} \big) \geq \tau, \forall \theta \in \Theta,
	\\
	& & & \mathrm{C}_{6}: ~ \mathbf{W} = \mathbf{w} \mathbf{w}^\mathrm{H},
\end{align*}
where the trace commutative property was used in $ \mathrm{C}_{3} $ and $ \mathrm{C}_{5} $. Although constraint $ \mathrm{C}_{6} $ defines $ \mathbf{W} $ as rank-one, i.e., nonconvex and challenging to address, in our case $ \mathbf{W} $ arises from the product of discrete phases, encoded via binary variables. Particularly, the product of these variables can be optimally handled, as detailed in \cref{thm:proposition-1} to \cref{thm:proposition-5}.

\begin{proposition} \label{thm:proposition-1}
	Constraint $ \mathrm{C}_{1} $ can be equivalently rewritten as constraints $ \mathrm{D}_{1} $, $ \mathrm{D}_{2} $, and $ \mathrm{D}_{3} $, 
	\begin{equation} \nonumber
		\mathrm{C}_{1} \Leftrightarrow
			\begin{cases}
				   	\mathrm{D}_{1}: \left[ \mathbf{x}_n \right]_l \in \left\lbrace 0, 1\right\rbrace, \forall n \in \mathcal{N}, l \in \mathcal{L}, 
				   	\\	
				   	\mathrm{D}_{2}: \mathbf{1}^\mathrm{T} \mathbf{x}_n = 1, \forall n \in \mathcal{N},  
				   	\\
				   	\mathrm{D}_{3}: \left[ \mathbf{w} \right]_n = \mathbf{s}^\mathrm{T} \mathbf{x}_n, \forall n \in \mathcal{N}, 
			\end{cases}
	\end{equation}
	where vector $ \mathbf{s} \in \mathbb{C}^{L \times 1} $ is formed by all the elements in $ \mathcal{S} $, and $ \mathcal{L} = \left\lbrace 1, \dots, L \right\rbrace $.
\end{proposition}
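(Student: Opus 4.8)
The plan is to read the claimed equivalence $\mathrm{C}_{1} \Leftrightarrow \{\mathrm{D}_{1}, \mathrm{D}_{2}, \mathrm{D}_{3}\}$ as a feasible-set equality, where the auxiliary vectors $\mathbf{x}_n$ are existentially introduced: I would show that, for each antenna index $n \in \mathcal{N}$, the set of values $[\mathbf{w}]_n$ admissible under $\mathrm{C}_{1}$ coincides exactly with the set of values producible through $\mathrm{D}_{3}$ subject to $\mathrm{D}_{1}$ and $\mathrm{D}_{2}$. Because $\mathrm{C}_{1}$ and all three $\mathrm{D}$-constraints decouple across $n$, it suffices to argue for a single fixed $n$ and then range over $\mathcal{N}$. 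The underlying idea is the standard one-hot (binary-selection) encoding of a discrete variable, and the pivotal observation I would isolate first is that $\mathrm{D}_{1}$ and $\mathrm{D}_{2}$ jointly characterize the one-hot vectors: any $\mathbf{x}_n \in \{0,1\}^L$ whose entries sum to one has exactly one entry equal to $1$ and the remaining $L-1$ entries equal to $0$, i.e., $\mathbf{x}_n = \mathbf{e}_l$ for some $l \in \mathcal{L}$, where $\mathbf{e}_l$ denotes the $l$-th canonical basis vector.

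For the forward direction ($\Rightarrow$), I would suppose $[\mathbf{w}]_n \in \mathcal{S}$. By the definition of $\mathcal{S}$, there exists an index $l^\star \in \mathcal{L}$ with $[\mathbf{w}]_n = \delta e^{j \phi_{l^\star}} = [\mathbf{s}]_{l^\star}$. Setting $\mathbf{x}_n = \mathbf{e}_{l^\star}$ then satisfies $\mathrm{D}_{1}$ and $\mathrm{D}_{2}$ by construction, and $\mathrm{D}_{3}$ holds because $\mathbf{s}^\mathrm{T} \mathbf{e}_{l^\star} = [\mathbf{s}]_{l^\star} = [\mathbf{w}]_n$.

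For the reverse direction ($\Leftarrow$), I would suppose $\mathrm{D}_{1}$, $\mathrm{D}_{2}$, $\mathrm{D}_{3}$ hold. The one-hot characterization forces $\mathbf{x}_n = \mathbf{e}_l$ for some $l \in \mathcal{L}$, so $\mathrm{D}_{3}$ yields $[\mathbf{w}]_n = \mathbf{s}^\mathrm{T} \mathbf{e}_l = [\mathbf{s}]_l = \delta e^{j \phi_l} \in \mathcal{S}$, which is precisely $\mathrm{C}_{1}$. Combining the two directions and quantifying over all $n \in \mathcal{N}$ gives the asserted equivalence.

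As for the main obstacle, there is in fact no analytical difficulty: the argument is an elementary one-hot encoding, and the feasible set of $\mathbf{w}$ is preserved verbatim. The only points that genuinely require care are (i) making the $\mathrm{D}_{1} \wedge \mathrm{D}_{2}$ one-hot characterization explicit rather than tacitly assumed, since it is what turns the affine map $\mathbf{s}^\mathrm{T} \mathbf{x}_n$ into a selection of a single element of $\mathcal{S}$, and (ii) stating clearly that the equivalence holds at the level of the optimization problem, with $\mathbf{x}_n$ as auxiliary variables, so that substituting the $\mathrm{D}$-constraints for $\mathrm{C}_{1}$ leaves the optimum unchanged. Note that uniqueness of the selecting index $l^\star$ (which holds when the phases $\phi_1, \dots, \phi_L$ are distinct) is convenient but not needed: the equivalence follows from set membership alone.
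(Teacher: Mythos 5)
Your proof is correct and uses the same underlying idea as the paper: the one-hot (binary-selection) encoding of the discrete phase set, with $\mathrm{D}_{1}$ and $\mathrm{D}_{2}$ enforcing a single active entry and $\mathrm{D}_{3}$ mapping that entry to an element of $\mathcal{S}$. The paper's proof is only an informal sketch of this encoding, whereas you make both directions of the equivalence explicit; this is a more rigorous rendering of the same argument, not a different route.
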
 
\begin{proof}
	Binary variables can be used to encode the phase selection for each antenna. Hence, for each antenna $ n $, a binary vector $ \mathbf{x}_n $ is introduced, as stated in $ \mathrm{D}_{1} $. Also, only one phase per antenna must be selected, which is enforced by $ \mathrm{D}_{2} $.	Finally, $ \mathrm{D}_{3} $ maps $ \mathbf{x}_n $ to one of the phases in $ \mathbf{s} $.
\end{proof}

\begin{proposition} \label{thm:proposition-2}
	Constraint $ \mathrm{C}_{6} $ can be equivalently rewritten as constraint $ \mathrm{E}_{1} $, 
	\begin{equation} \nonumber
		\mathrm{C}_{6} \Leftrightarrow \mathrm{E}_{1}: \left[ \mathbf{W} \right]_{n,m} = \left[ \mathbf{w} \right]_n \left[ \mathbf{w}^{*} \right]_m, \forall n, m \in \mathcal{N},
	\end{equation}
	where $ \left[ \mathbf{W} \right]_{n,m} $ represents the element of $ \mathbf{W} $ in the $ n $-th row and $ m $-th column.
\end{proposition}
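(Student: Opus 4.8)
The plan is to observe that $\mathrm{C}_{6}$ is a single matrix identity, and that two matrices are equal precisely when all of their corresponding entries coincide. Hence the claimed equivalence reduces to correctly identifying the $(n,m)$ entry of the outer product $\mathbf{w} \mathbf{w}^\mathrm{H}$ and then invoking entrywise equality. This is an elementary restatement rather than a substantive derivation, but it is the step that converts the matrix-level (and implicitly rank-one) relation into a collection of scalar product relations, which is what makes the subsequent linearization of products of binary-encoded phases possible.

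First I would expand the right-hand side entrywise. By the definition of the Hermitian transpose, $\mathbf{w}^\mathrm{H} = \left( \mathbf{w}^{*} \right)^\mathrm{T}$, so its $m$-th component is $\left[ \mathbf{w}^{*} \right]_m$. Since $\mathbf{w}$ is a column vector and $\mathbf{w}^\mathrm{H}$ a row vector, their outer product has $(n,m)$ entry equal to the product of the $n$-th entry of $\mathbf{w}$ with the $m$-th entry of $\mathbf{w}^\mathrm{H}$, that is, $\left[ \mathbf{w} \mathbf{w}^\mathrm{H} \right]_{n,m} = \left[ \mathbf{w} \right]_n \left[ \mathbf{w}^{*} \right]_m$ for all $n, m \in \mathcal{N}$.

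Next I would apply the standard fact that $\mathbf{W} = \mathbf{w} \mathbf{w}^\mathrm{H}$ holds if and only if $\left[ \mathbf{W} \right]_{n,m} = \left[ \mathbf{w} \mathbf{w}^\mathrm{H} \right]_{n,m}$ for every pair $(n,m) \in \mathcal{N} \times \mathcal{N}$. Substituting the entry computed above yields exactly $\mathrm{E}_{1}$. Both directions are immediate: the forward implication follows by reading off each of the $N^2$ scalar entries of the matrix identity, and the reverse implication follows by assembling those $N^2$ scalar identities back into the matrix identity. There is no genuine obstacle here; the only point requiring care is bookkeeping of the conjugation in the second factor, i.e.\ ensuring that the $m$-th entry of $\mathbf{w}^\mathrm{H}$ is the conjugate $\left[ \mathbf{w}^{*} \right]_m$ and not $\left[ \mathbf{w} \right]_m$, so that the Hermitian (outer-product) structure of $\mathbf{W}$ is preserved.
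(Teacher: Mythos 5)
Your proposal is correct and follows essentially the same route as the paper: both identify the $(n,m)$ entry of $\mathbf{w}\mathbf{w}^\mathrm{H}$ as $\left[\mathbf{w}\right]_n \left[\mathbf{w}^{*}\right]_m$ and reduce the matrix identity $\mathrm{C}_6$ to entrywise equality. Your write-up is merely slightly more explicit about the two directions of the equivalence and the conjugation bookkeeping, which the paper compresses into one sentence.
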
 
\begin{proof}
	Given $ \mathbf{W} = \mathbf{w} \mathbf{w}^\mathrm{H} $, the $n$-th row of $ \mathbf{W} $ is $ \left[ \mathbf{w} \right]_n \mathbf{w}^\mathrm{H} $ and the $ m $-th element of that row is $\left[ \mathbf{w} \right]_n \left[ \mathbf{w}^{*} \right]_m $.
\end{proof}

\begin{proposition} \label{thm:proposition-3}
	Constraints $ \mathrm{D}_{3} $ and $ \mathrm{E}_{1} $ can be equivalently expressed as constraints $ \mathrm{F}_{1} $, $ \mathrm{F}_{2} $, and $ \mathrm{F}_{3} $,
	\begin{equation} \nonumber
		\mathrm{D}_{3}, \mathrm{E}_{1} \Leftrightarrow
			\begin{cases}
				   	\mathrm{F}_{1}: \left[ \mathbf{W} \right]_{n,m} = \mathrm{Tr} \left( \mathbf{S} \mathbf{x}_n \mathbf{x}^\mathrm{T}_m \right), \forall n \in \mathcal{N}, m \in \mathcal{M}_n,
				   	\\	
				   	\mathrm{F}_{2}: \left[ \mathbf{W} \right]_{m,n} = \left[ \mathbf{W} \right]^{*}_{n,m}, \forall n \in \mathcal{N}, m \in \mathcal{M}_n,
				   	\\	
				   	\mathrm{F}_{3}: \left[ \mathbf{W} \right]_{n,n} = \delta^2, \forall n \in \mathcal{N},  
			\end{cases}
	\end{equation}
	where $ \mathbf{S} = \mathbf{s}^{*} \mathbf{s}^\mathrm{T} $ and $ \mathcal{M}_n = \left\lbrace n+1, \dots, N \right\rbrace $.
\end{proposition}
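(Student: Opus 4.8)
The plan is to prove the stated equivalence entrywise, treating the three index regions of $ \mathbf{W} $ separately: the strict upper triangle ($ m \in \mathcal{M}_n $), the diagonal ($ m = n $), and the strict lower triangle ($ m < n $), since $ \mathrm{F}_{1} $, $ \mathrm{F}_{3} $, and $ \mathrm{F}_{2} $ respectively govern these regions. The single algebraic identity driving both directions is the rewriting of the scalar $ \left[ \mathbf{w} \right]_n \left[ \mathbf{w}^{*} \right]_m $ as a trace, so I would first derive that identity and then read it off in each region, exploiting that $ \mathbf{x}_n $ is real (binary, by $ \mathrm{D}_{1} $) and that $ \mathcal{S} $ has constant modulus $ \delta $.

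For the forward direction ($ \mathrm{D}_{3}, \mathrm{E}_{1} \Rightarrow \mathrm{F}_{1}, \mathrm{F}_{2}, \mathrm{F}_{3} $), I would substitute $ \mathrm{D}_{3} $ into $ \mathrm{E}_{1} $ to obtain $ \left[ \mathbf{W} \right]_{n,m} = \left( \mathbf{s}^\mathrm{T} \mathbf{x}_n \right) \left( \mathbf{s}^\mathrm{T} \mathbf{x}_m \right)^{*} $. Since $ \mathbf{x}_m $ is real, the conjugate factor is $ \left( \mathbf{s}^{*} \right)^\mathrm{T} \mathbf{x}_m $, and as both factors are scalars the cyclic property of the trace gives $ \left[ \mathbf{W} \right]_{n,m} = \mathrm{Tr} \left( \mathbf{s}^{*} \mathbf{s}^\mathrm{T} \mathbf{x}_n \mathbf{x}_m^\mathrm{T} \right) = \mathrm{Tr} \left( \mathbf{S} \mathbf{x}_n \mathbf{x}_m^\mathrm{T} \right) $. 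Restricting to $ m \in \mathcal{M}_n $ yields $ \mathrm{F}_{1} $. For $ m = n $ the same computation gives $ \left[ \mathbf{W} \right]_{n,n} = \left| \mathbf{s}^\mathrm{T} \mathbf{x}_n \right|^2 $; invoking $ \mathrm{D}_{1} $ and $ \mathrm{D}_{2} $ (exactly one phase is selected) together with the constant-modulus structure of $ \mathcal{S} $ collapses this to $ \delta^2 $, which is $ \mathrm{F}_{3} $. Finally, $ \mathbf{W} = \mathbf{w} \mathbf{w}^\mathrm{H} $ is Hermitian, so $ \left[ \mathbf{W} \right]_{m,n} = \left[ \mathbf{W} \right]^{*}_{n,m} $, giving $ \mathrm{F}_{2} $ and letting me discard the redundant lower-triangular entries.

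For the reverse direction, I would take $ \mathbf{x}_n $ feasible for $ \mathrm{D}_{1} $ and $ \mathrm{D}_{2} $, define $ \mathbf{w} $ through $ \mathrm{D}_{3} $ (so $ \mathrm{D}_{3} $ holds by construction), and reconstruct $ \mathbf{w} \mathbf{w}^\mathrm{H} $ entrywise. Reversing the trace identity, $ \mathrm{F}_{1} $ reproduces the strict-upper-triangular entries of $ \mathbf{w} \mathbf{w}^\mathrm{H} $, $ \mathrm{F}_{3} $ reproduces its diagonal (again via $ \left| \left[ \mathbf{w} \right]_n \right|^2 = \delta^2 $), and $ \mathrm{F}_{2} $ supplies the strict-lower-triangular entries as conjugates, matching $ \left[ \mathbf{w} \right]_m \left[ \mathbf{w}^{*} \right]_n $. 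Since all three regions agree, $ \mathbf{W} = \mathbf{w} \mathbf{w}^\mathrm{H} $, i.e., $ \mathrm{E}_{1} $ holds, completing the equivalence.

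I expect the main obstacle to be index bookkeeping rather than any deep step: I must verify that the strict upper triangle, together with Hermitian symmetry and the constant diagonal, determines $ \mathbf{W} $ with no loss of information, so that restricting $ \mathrm{F}_{1} $ to $ m \in \mathcal{M}_n $ is genuinely lossless; and I must handle the conjugate/transpose interplay among $ \mathbf{s} $, $ \mathbf{s}^{*} $, and $ \mathbf{s}^\mathrm{T} $ correctly, leaning on the realness of $ \mathbf{x}_n $ at each step. The only genuinely substantive ingredient is the constant-modulus property, which is what turns the diagonal trace into the constant $ \delta^2 $ in $ \mathrm{F}_{3} $ and decouples those entries from $ \mathbf{x}_n $.
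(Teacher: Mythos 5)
Your proposal is correct and follows essentially the same route as the paper's proof: substituting $\mathrm{D}_{3}$ into $\mathrm{E}_{1}$, rewriting the scalar product $\left[ \mathbf{w} \right]_n \left[ \mathbf{w}^{*} \right]_m$ as $\mathrm{Tr}\left( \mathbf{S} \mathbf{x}_n \mathbf{x}^\mathrm{T}_m \right)$ via the trace's cyclic property, and then exploiting the Hermitian structure of $\mathbf{W}$ (conjugate-symmetric off-diagonal entries, constant-modulus diagonal equal to $\delta^2$) to restrict indexing to the strict upper triangle. The only difference is that you make the reverse (reconstruction) direction explicit, including the role of $\mathrm{D}_{1}$ and $\mathrm{D}_{2}$ in fixing the diagonal, whereas the paper treats this as immediate; that is added rigor rather than a different method.
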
 
\begin{proof}
	Replacing $ \mathrm{D}_{3} $ in $ \mathrm{E}_{1} $ leads to $ \left[ \mathbf{W} \right]_{n,m} =  \mathbf{s}^\mathrm{T} \mathbf{x}_n \mathbf{x}^\mathrm{T}_m \mathbf{s}^{*} = \mathrm{Tr} \left( \mathbf{S} \mathbf{x}_n \mathbf{x}^\mathrm{T}_m \right), \forall n, m \in \mathcal{N}. $ Since $ \mathrm{E}_{1} $ implies that $ \mathbf{W} $ is Hermitian, the elements with respect to the diagonal are conjugate symmetrical while the diagonal elements are $ \delta^2 $. Thus, instead of indexing the whole matrix $ \mathbf{W} $, we can only index the upper triangular part as accomplished by $ \mathrm{F}_{1} $, $ \mathrm{F}_{2} $, $ \mathrm{F}_{3} $, thereby reducing the number of decision variables.
\end{proof}

\begin{proposition} \label{thm:proposition-4}
	Constraint $ \mathrm{F}_{1} $ can be equivalently expressed as constraints $ \mathrm{G}_{1} $, $ \mathrm{G}_{2} $, and $ \mathrm{G}_{3} $, 
	\begin{equation} \nonumber
		\mathrm{F}_{1} \Leftrightarrow
			\begin{cases}
				   	\mathrm{G}_{1}: \mathbf{Y}_{n,m} = \mathbf{x}_n \mathbf{x}^T_m, \forall n \in \mathcal{N}, m \in \mathcal{M}_n,
				   	\\	
				   	\mathrm{G}_{2}: \left[ \mathbf{W} \right]_{n,m} = \mathrm{Tr} \left( \mathbf{S} \mathbf{Y}_{n,m}  \right), \forall n \in \mathcal{N}, m \in \mathcal{M}_n,  
				   	\\	
				   	\mathrm{G}_{3}: \left[ \mathbf{Y}_{n,m} \right]_{l,i} \in \left[ 0, 1 \right], \forall n \in \mathcal{N}, m \in \mathcal{M}_n, l, i \in \mathcal{L}, 
			\end{cases}
	\end{equation}
\end{proposition}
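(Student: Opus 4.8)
The plan is to prove the claimed equivalence as an equivalence of feasible sets projected onto the original variables $(\mathbf{W}, \{\mathbf{x}_n\}_{n\in\mathcal{N}})$, treating each $\mathbf{Y}_{n,m}$ as an auxiliary lifted variable that is existentially quantified. The whole idea is that $\mathrm{G}_{1}$ isolates the bilinear outer product $\mathbf{x}_n \mathbf{x}_m^\mathrm{T}$ into a single matrix, so that the trace appearing in $\mathrm{F}_{1}$ becomes the linear map $\mathbf{Y}_{n,m} \mapsto \mathrm{Tr}(\mathbf{S}\mathbf{Y}_{n,m})$ recorded in $\mathrm{G}_{2}$, while the box constraint $\mathrm{G}_{3}$ captures the range of the lifted entries and will later enable a linear relaxation of $\mathrm{G}_{1}$.

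First I would establish $\mathrm{F}_{1} \Rightarrow \{\mathrm{G}_{1},\mathrm{G}_{2},\mathrm{G}_{3}\}$: given any point satisfying $\mathrm{F}_{1}$, define $\mathbf{Y}_{n,m} \triangleq \mathbf{x}_n \mathbf{x}_m^\mathrm{T}$, which is exactly $\mathrm{G}_{1}$ and therefore certifies that the required auxiliary variables exist. Substituting this definition back into $\mathrm{F}_{1}$ yields $[\mathbf{W}]_{n,m} = \mathrm{Tr}(\mathbf{S}\mathbf{Y}_{n,m})$, i.e., $\mathrm{G}_{2}$. For $\mathrm{G}_{3}$, I would invoke $\mathrm{D}_{1}$: each entry $[\mathbf{Y}_{n,m}]_{l,i} = [\mathbf{x}_n]_l[\mathbf{x}_m]_i$ is a product of two binary values, hence belongs to $\{0,1\}\subseteq[0,1]$. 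The converse is a pure back-substitution: under $\mathrm{G}_{1}$, replacing $\mathbf{Y}_{n,m}$ by $\mathbf{x}_n\mathbf{x}_m^\mathrm{T}$ in $\mathrm{G}_{2}$ recovers $\mathrm{F}_{1}$ verbatim, so any point feasible for $\{\mathrm{G}_{1},\mathrm{G}_{2},\mathrm{G}_{3}\}$ is feasible for $\mathrm{F}_{1}$.

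There is no hard computational obstacle here; the reformulation is essentially a definitional lift. The only point requiring care is arguing that appending $\mathrm{G}_{3}$ does not shrink the projected feasible set, i.e., that $\mathrm{G}_{3}$ is a valid (indeed redundant) consequence of $\mathrm{G}_{1}$ together with the binarity in $\mathrm{D}_{1}$ rather than an extra restriction; the forward direction above settles this, since the $\mathbf{Y}_{n,m}$ constructed there automatically satisfy $\mathrm{G}_{3}$. I would close by noting that $\mathrm{G}_{2}$ is now linear in $(\mathbf{W}, \mathbf{Y}_{n,m})$, leaving only the bilinear equality $\mathrm{G}_{1}$ to be handled in the subsequent step.
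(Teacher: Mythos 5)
Your proposal is correct and follows essentially the same route as the paper: introduce $\mathbf{Y}_{n,m}$ as a definitional lift of the product $\mathbf{x}_n \mathbf{x}_m^\mathrm{T}$, substitute in both directions, and observe that $\mathrm{G}_{3}$ is not a restriction because the binarity of the $\mathbf{x}_n$ (constraint $\mathrm{D}_{1}$) together with $\mathrm{G}_{1}$ automatically places each entry of $\mathbf{Y}_{n,m}$ in $\{0,1\}\subseteq[0,1]$. Your write-up is merely more explicit than the paper's in spelling out the two implications and the projection-onto-original-variables viewpoint, but the substance is identical.
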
 
\begin{proof}
	Variable $ \mathbf{Y}_{n,m} $ is introduced to replace $ \mathbf{x}_n \mathbf{x}^T_m $, as shown in $ \mathrm{G}_{1} $. Applying $ \mathrm{G}_{1} $ to $ \mathrm{F}_{1} $ leads to $ \mathrm{G}_{2} $. Finally, $ \mathbf{Y}_{n,m} $ is defined as having entries in the interval $ \left[ 0, 1 \right] $, as stated by $ \mathrm{G}_{3} $. Here, it is not necessary to define $ \left[ \mathbf{Y}_{n,m} \right]_{l,i} $ as binary since the integrality nature is automatically enforced by $ \mathrm{G}_{1} $.
\end{proof}

\begin{proposition} \label{thm:proposition-5}
	Constraint $ \mathrm{G}_{1} $ can be equivalently expressed as constraints $ \mathrm{H}_{1} $ and $ \mathrm{H}_{2} $, 
	\begin{equation} \nonumber
		\mathrm{G}_{1} \Leftrightarrow
			\begin{cases}
				   	\mathrm{H}_{1}: \mathbf{1}^\mathrm{T} \left[ \mathbf{Y}_{n,m} \right]_{:,i} = \left[ \mathbf{x}_m \right]_{i}, \forall n \in \mathcal{N}, m \in \mathcal{M}_n, i \in \mathcal{L},
				   	\\	
				   	\mathrm{H}_{2}: \left[ \mathbf{Y}_{n,m} \right]_{l,:} \mathbf{1} = \left[ \mathbf{x}_n \right]_l, \forall n \in \mathcal{N}, m \in \mathcal{M}_n, l \in \mathcal{L},
			\end{cases}
	\end{equation}
\end{proposition}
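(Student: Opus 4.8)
The plan is to establish the equivalence $\mathrm{G}_{1} \Leftrightarrow \{\mathrm{H}_{1}, \mathrm{H}_{2}\}$ by proving both implications, exploiting that $\mathbf{x}_n$ and $\mathbf{x}_m$ are \emph{one-hot} vectors — a consequence of $\mathrm{D}_{1}$ (binary entries) together with $\mathrm{D}_{2}$ ($\mathbf{1}^\mathrm{T}\mathbf{x}_n = 1$) — and that $\mathbf{Y}_{n,m}$ has entries in $[0,1]$ by $\mathrm{G}_{3}$. These standing assumptions from earlier in the derivation are what make the reformulation tight, so I would flag them at the outset.

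For the forward direction ($\mathrm{G}_{1} \Rightarrow \mathrm{H}_{1}, \mathrm{H}_{2}$), I would substitute $\mathbf{Y}_{n,m} = \mathbf{x}_n \mathbf{x}_m^\mathrm{T}$ directly and read off its row and column marginals. The $i$-th column is $[\mathbf{Y}_{n,m}]_{:,i} = \mathbf{x}_n [\mathbf{x}_m]_i$, so $\mathbf{1}^\mathrm{T}[\mathbf{Y}_{n,m}]_{:,i} = (\mathbf{1}^\mathrm{T}\mathbf{x}_n)[\mathbf{x}_m]_i = [\mathbf{x}_m]_i$ using $\mathbf{1}^\mathrm{T}\mathbf{x}_n = 1$, which is $\mathrm{H}_{1}$. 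Symmetrically, the $l$-th row is $[\mathbf{Y}_{n,m}]_{l,:} = [\mathbf{x}_n]_l \mathbf{x}_m^\mathrm{T}$, whence $[\mathbf{Y}_{n,m}]_{l,:}\mathbf{1} = [\mathbf{x}_n]_l(\mathbf{x}_m^\mathrm{T}\mathbf{1}) = [\mathbf{x}_n]_l$, giving $\mathrm{H}_{2}$. This direction is routine once the marginals of the outer product are recognized.

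The reverse direction ($\mathrm{H}_{1}, \mathrm{H}_{2} \Rightarrow \mathrm{G}_{1}$) is the crux. Let $p$ and $q$ denote the unique indices with $[\mathbf{x}_n]_p = 1$ and $[\mathbf{x}_m]_q = 1$. From $\mathrm{H}_{2}$, the sum of row $l$ of $\mathbf{Y}_{n,m}$ equals $[\mathbf{x}_n]_l$, which is $0$ for every $l \neq p$; since the entries are nonnegative by $\mathrm{G}_{3}$, a zero row sum forces that entire row to vanish. Likewise, from $\mathrm{H}_{1}$, each column $i \neq q$ has zero sum and hence vanishes. Thus the only entry that may be nonzero is $[\mathbf{Y}_{n,m}]_{p,q}$, and either marginal (the row-$p$ sum or the column-$q$ sum equal to $1$) pins it to $[\mathbf{Y}_{n,m}]_{p,q} = 1$. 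This reconstructs $\mathbf{Y}_{n,m} = \mathbf{x}_n\mathbf{x}_m^\mathrm{T}$ exactly and, as a byproduct, shows the entries are automatically integral, consistent with using the relaxed box $\mathrm{G}_{3}$ rather than binary constraints.

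I expect the main obstacle to be making precise why the linear \emph{marginal} constraints suffice to recover the full bilinear product. The argument hinges on two ingredients that must be invoked explicitly: the nonnegativity supplied by $\mathrm{G}_{3}$ (so that a zero sum of nonnegative terms annihilates each term) and the one-hot structure of $\mathbf{x}_n, \mathbf{x}_m$ (so the marginals are indicator vectors with a single unit entry). Were either relaxed — e.g.\ allowing negative entries or non-unit column sums — the marginals alone would admit spurious solutions; the proof should therefore emphasize that these assumptions are what close the gap and guarantee the reformulation incurs no loss of optimality.
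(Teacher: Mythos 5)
Your proof is correct, and it is built on the same core observation as the paper's: $\mathbf{x}_n$ and $\mathbf{x}_m$ are one-hot vectors, so the row and column sums (marginals) of $\mathbf{Y}_{n,m}$ carry all the information of the outer product. Where you go beyond the paper is in rigor and completeness. The paper's proof discusses only $\mathrm{H}_2$ (declaring $\mathrm{H}_1$ analogous) and in substance establishes just the forward direction: if $[\mathbf{x}_n]_l = 0$ the $l$-th row sums to $0$, and if $[\mathbf{x}_n]_l = 1$ it sums to $1$. It never explicitly argues the converse, i.e., why a matrix satisfying the marginal constraints must \emph{be} $\mathbf{x}_n\mathbf{x}_m^\mathrm{T}$. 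Your reverse direction fills exactly this gap, and it surfaces two facts the paper leaves implicit: first, the nonnegativity of the entries from $\mathrm{G}_3$ is essential (a zero sum of sign-indefinite terms would not annihilate each term, so the equivalence would fail under a symmetric box such as $[-1,1]$); second, $\mathrm{H}_1$ and $\mathrm{H}_2$ are needed \emph{jointly} — row sums alone would let the mass in row $p$ spread over several columns — which the paper's ``analogous'' treatment of the two constraints obscures. What the paper's version buys is brevity and intuition appropriate to a letter; what yours buys is an actual two-sided equivalence proof, which is what justifies the claim that the MILP reformulation incurs no loss of optimality.
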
 
\begin{proof}
	Due to space constraints, the following explanation focuses solely on $ \mathrm{H}_{2} $ and $ \mathrm{G}_{1} $. However, the relation between $ \mathrm{H}_{1} $ and $ \mathrm{G}_{1} $ is analogous, given the structural similarity between $ \mathrm{H}_{1} $ and $ \mathrm{H}_{2} $. Matrix $ \mathbf{Y}_{n,m} $ is the product of binary vectors $ \mathbf{x}_n $ and $ \mathbf{x}^T_m $, each of which has only one element $ 1 $. Hence, $ \mathbf{Y}_{n,m} $ also has one element $ 1 $ and the rest are $ 0 $. Therefore, if $ \left[ \mathbf{x}_n \right]_l = 1 $ and $ \left[ \mathbf{x}_m \right]_i = 1 $, then $ \left[ \mathbf{Y}_{n,m} \right]_{l,i} = 1 $. We  use this notion to get rid of the product $ \mathbf{x}_n \mathbf{x}^T_m $. According to $ \mathrm{G}_{1} $, $ \left[ \mathbf{x}_n \right]_l \mathbf{x}^T_m $ represents the $ l $-th row of $ \mathbf{Y}_{n,m} $, i.e., $ \left[ \mathbf{Y}_{n,m} \right]_{l,:} = \left[ \mathbf{x}_n \right]_l \mathbf{x}^T_m $. When $ \left[ \mathbf{x}_n \right]_l = 0 $, all the elements of $ \left[ \mathbf{Y}_{n,m} \right]_{l,:} $ are $ 0 $, which is equivalent to stating that the sum of all the elements of $ \left[ \mathbf{Y}_{n,m} \right]_{l,:} $ is $ 0 $. When $ \left[ \mathbf{x}_n \right]_l = 1 $, then $ \left[ \mathbf{Y}_{n,m} \right]_{l,:} = \mathbf{x}^T_m $. Since $ \mathbf{x}^T_m $ has one element equal to $ 1 $, then the sum of elements of $ \left[ \mathbf{Y}_{n,m} \right]_{l,:} $ must be $ 1 $. 
\end{proof}

After applying the transformation procedures above, problem $ \mathcal{P}' $ is equivalently recast as \gls{MILP} $ \mathcal{P}'' $
\begin{align*} 
	\mathcal{P}'': & \underset{ \substack{ \mathbf{W}, \mathbf{X}, \mathbf{Y}, \boldsymbol{\mu}, \tau } } {\mathrm{~maximize}}
	& & f \left( \boldsymbol{\mu}, \tau \right) ~~~~~ \mathrm{s.t.} & & \substack{ \mathrm{C}_{2}, \mathrm{C}_{3}, \mathrm{C}_{4}, \mathrm{C}_{5}, \mathrm{D}_{1}, \mathrm{D}_{2}, \\ \mathrm{F}_{2}, \mathrm{F}_{3}, \mathrm{G}_{2}, \mathrm{G}_{3}, \mathrm{H}_{1}, \mathrm{H}_{2}.}
\end{align*}

Here, $ \mathbf{W} $ is not a variable but a placeholder and, therefore, it does not have to be defined as the former.

\begin{remark}
The worst-case computational complexity of problem $ \mathcal{P} $ is an \gls{ES}, which requires evaluating $ \mathcal{C}_\mathrm{ES} = 2^{Q N} \sum_{i = 0}^U {U \choose U-i} $ candidate solutions. However, the structure of $ \mathcal{P}'' $ allows us to utilize \gls{BnC} techniques, implemented in commercial solvers, which can solve $ \mathcal{P}'' $ optimally at a small fraction of $ \mathcal{C}_\mathrm{ES} $. In particular, \gls{BnC} operates by pruning suboptimal and infeasible candidate solutions \cite{desrosiers2010:branch-price-cut-algorithms}, but this lies beyond the scope of this work.
\end{remark}



\section{Simulation Results} \label{sec:simulation-results}

We evaluate our proposed approach, {\mytextsf{OPT}}, for various parameter settings. We consider the Rician fading channel model, which allows \gls{LoS} and \gls{NLoS} channel components with different contributions. Thus, the channel for user $ u $ is given by $ \mathbf{h}_u = \gamma_u \mathbf{v}_u $, where $ \gamma_u $ accounts for large-scale fading and $ \mathbf{v}_u = \sqrt{K / (K+1)} \mathbf{v}^\mathrm{LoS}_u + \sqrt{1/(K+1)} \mathbf{v}^\mathrm{NLoS}_u, \forall u \in \mathcal{U} $ is the normalized small-scale fading, with $ K $ being the Rician fading factor. The \gls{LoS} component is defined as $ \mathbf{v}^\mathrm{LoS}_u = \mathbf{a} \left( \beta_u \right) $, where $ \beta_u $ is the \gls{LoS} angle, and the \gls{NLoS} components are defined as $ \mathbf{v}^\mathrm{NLoS}_u \sim \mathcal{CN} \left( \mathbf{0}, \mathbf{I} \right) $. To compute large-scale fading, we use the \texttt{UMa} channel model \cite{3gpp:38.901}, i.e., $ \gamma_u = 28 + 22 \log_{10} (d_u) + 20 \log_{10} (f_\mathrm{c}) $ dB, where $ f_\mathrm{c} $ is the carrier frequency and $ d_u $ is the distance between the \gls{BS} and user $ u $. Unless specified otherwise, we consider $ f_\mathrm{c} = 71 $ GHz, $ P_\mathrm{tx} = 36 $ dBm, $ \sigma_\mathrm{com}^2 = \sigma_\mathrm{sen}^2 = -84 $ dBm, $ N = 10 $, $ U = 5 $, $ Q = 3 $, $ \Gamma_\mathrm{th} = 30 $, $ \Delta = 0 $, $ C = 33 $, $ \theta = 120 $, $ \beta_1 = 30 $, $ \beta_2 = 40 $, $ \beta_3 = 50 $, $ \beta_4 = 60 $, $ \beta_5 = 70 $, $ d_1 = \dots = d_5 = 40 $ m, $ \mathcal{S} = \big\{ \delta, \delta e^{j 2 \pi\frac{1}{2^Q}}, \dots, \delta e^{j 2 \pi\frac{2^Q-1}{2^Q}} \big\} $, and $ \alpha = \frac{\lambda^2 R}{64 \pi^3 \tilde{d}^4} $, where $ \lambda $ is the wavelength, $ R = 1 $ is the radar cross-section, $ \tilde{d} = 20 $ m is the distance between the \gls{BS} and the target \cite{wen2022:joint-secure-communication-radar-beamforming-secrecy-estimation-rate-based-design}. We used \texttt{CVX} and \texttt{MOSEK} on a laptop equipped with 16GB of RAM and an Intel Core i7@1.8GHz processor for our simulations.

\textbf{\emph{Scenario I:} } We investigate the impact of the number of antennas, $ N $, number of quantization bits, $ Q $, and transmit power, $ P_\mathrm{tx} $, on the sensing performance, $ f_\mathrm{sen} \left( \tau \right) $. \cref{fig:scenario-a} shows that as $ N $ increases, $ f_\mathrm{sen} \left( \tau \right) $ improves due to enhanced directivity. Similarly, higher $ P_\mathrm{tx} $ allows the \gls{SNR} threshold, $ \Gamma_\mathrm{th} $, to be met more easily, leaving additional power for sensing. Interestingly, increasing from $ Q = 3 $ to $ Q = 5 $ has minimal impact on $ f_\mathrm{sen} \left( \tau \right) $, with average differences within $ 5.3\% $ between $ Q = 3 $ and $ Q = 4 $, and within $ 6.5 \% $ between $ Q = 3 $ and $ Q = 5 $ when $ P_\mathrm{tx} = 42 $ dBm. The average runtime complexities for $ Q = 3 $, $ Q = 4 $, and $ Q = 5 $ are $ 0.3109 $, $ 0.9881 $, and $ 6.9847 $ s, respectively. Thus, in the sequel we use $ Q = 3 $. We highlight that with the current configuration, all users are served, i.e., $ f_\mathrm{com} \left( \boldsymbol{\mu} \right) = 5 $. In subsequent scenarios, we show the effect of user admission on the sensing performance. 

\textbf{\emph{Scenario II:} } We investigate the impact of the \gls{SNR} threshold, $ \Gamma_\mathrm{th} $, and target's angle uncertainty, $ \Delta $, on communications and sensing performance. \cref{fig:scenario-b1} shows that the number of admitted users increases as $ P_\mathrm{tx} $ grows. Besides, when $ P_\mathrm{tx} < 22 $ dBm, no user can be served, thus all available power is allocated to sensing. At $ P_\mathrm{tx} = 22 $ dBm, only one user is served, while at $ P_\mathrm{tx} = 28 $ dBm, all users are admitted. Beyond this point, further increases in $ P_\mathrm{tx} $ do not affect $ f_\mathrm{com} \left( \boldsymbol{\mu} \right) $ as it has reached its maximum, thus surplus power is used to enhance $ f_\mathrm{sen} \left( \tau \right) $. In \cref{fig:scenario-b2}, the system remains in sensing-only mode over a larger $ P_\mathrm{tx} $ range since $ \Gamma_\mathrm{th} $ increased from $ 30 $ to $ 60 $, demanding more power per admitted user. This explains why it takes until $ P_\mathrm{tx} = 26 $ dBm for the system to transition into sensing and communications mode. Additionally, $ P_\mathrm{tx} = 32 $ dBm is needed to serve all users, compared to \cref{fig:scenario-b1} where $ P_\mathrm{tx} = 28 $ dBm was sufficient. Lastly, \cref{fig:scenario-b3} illustrates the impact of increasing angle uncertainty $ \Delta $ from $ 0 $ to $ 8 $, as compared to Fig. 3b. Despite the larger $ \Delta $, the performance of admission control remains unaffected, while sensing performance declines. This is because admission control takes priority, meaning that sensing is only considered after the communication functionality has been fully optimized. A larger $ \Delta $ spreads surplus power over a wider angular range, reducing power concentration in the directions of interest, degrading sensing performance. As $ f_\mathrm{sen} \left( \tau \right) $ is small due to the target's reflection, we have scaled it by a factor of $ 100 $ in \cref{fig:scenario-b} for visualization purposes only.

\begin{figure}[!t]
	\centering
	\includegraphics[width=0.90\columnwidth]{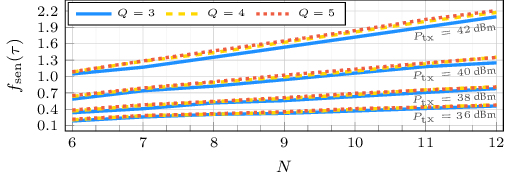}
	\vspace{-2.5mm}
	\caption{Impact of the number of antennas, transmit power, and phase resolution on the sensing performance.}	
	\label{fig:scenario-a}
	\vspace{-3mm}
\end{figure}
\begin{figure}[!t]
 	\begin{subfigure}[b]{0.32\columnwidth}
		\centering
		\includegraphics[height=2.8cm]{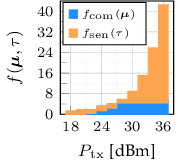}
		\caption{\footnotesize $ \Gamma_\mathrm{th} = 30 $, $ \Delta = 0 $.}	
		\label{fig:scenario-b1}
 	\end{subfigure}
    ~ 
 	\begin{subfigure}[b]{0.3\columnwidth}
		\centering
		\includegraphics[height=2.8cm]{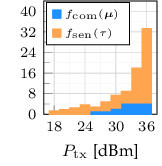}
		\caption{\footnotesize $ \Gamma_\mathrm{th} = 60 $, $ \Delta = 0 $.}	
		\label{fig:scenario-b2}
 	\end{subfigure}
    ~ 
 	\begin{subfigure}[b]{0.31\columnwidth}
		\centering
		\includegraphics[height=2.8cm]{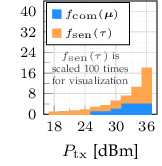}
		\caption{\footnotesize $ \Gamma_\mathrm{th} = 60 $, $ \Delta = 8 $.}	
		\label{fig:scenario-b3}
 	\end{subfigure}
    \caption{Impact of SNR threshold and target's angle uncertainty on sensing and communications.}
 	\label{fig:scenario-b}
 	\vspace{-3mm}
\end{figure}
\begin{figure}[!t]
	\centering
	\includegraphics[width=0.90\columnwidth]{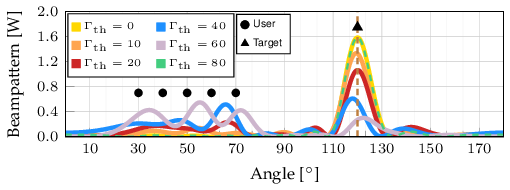}
	\vspace{-2.5mm}
	\caption{Impact of SNR threshold on the beampattern.}	
	\label{fig:scenario-c}
	\vspace{-3mm}
\end{figure}

\textbf{\emph{Scenario III:}} 
We show how the beampattern adapts to meet different \gls{ISAC} modes, considering $ P_\mathrm{tx} = 32 $ dBm. Here, we assume $ \mu_1 = \dots = \mu_5 $ for admission control, allowing us to observe sharper changes in the beampattern. \cref{fig:scenario-c} illustrates that as $ \Gamma_\mathrm{th} $ increases, the power towards the users grows, while the power illuminating the target decreases. When $ \Gamma_\mathrm{th} $ becomes too high to meet for the users, the \gls{ISAC} system collapses to sensing-only. That is observable with $ \Gamma_\mathrm{th} = 80 $, as the produced beampattern is identical to that obtained with $ \Gamma_\mathrm{th} = 0 $, demonstrating that our approach adapts to fully prioritize sensing when communications is not feasible.

\textbf{\emph{Scenario IV:}} We compare {\mytextsf{OPT}} against three baselines, {\mytextsf{BL1}}, {\mytextsf{BL2}}, and {\mytextsf{BL3}}, which we implemented using well-known optimization techniques. These are briefly described below.

\emph{Baseline 1 }({\mytextsf{BL1}}): This approach is based on \gls{SDR}, as in \cite{cao2022:pareto-optimal-analog-beamforming-design-integrated-mimo-radar-communication}. As the solution is not rank-one necessarily, the principal eigenvector is obtained via eigendecomposition. Afterwards, randomization and phase projection are applied to the eigenvector to meet the \gls{SNR} requirements. A total of $ 10^4 $ randomizations are employed.

\emph{Baseline 2 }({\mytextsf{BL2}}): This approach is based on the inner approximation of $ \mathrm{C}_3 $ and $ \mathrm{C}_5 $, as in \cite{abanto2023:radiorchestra-proactive-management-millimeter-wave-self-backhauled-small-cells-joint-optimization-beamforming-user-association-rate-selection-admission-control}, where a more conservative convex inequality is used to avoid dealing with the product $ \mathbf{w} \mathbf{w}^\mathrm{H} $. Thus, we use $ \mathfrak{Re} \left( \mathbf{h}_u^\mathrm{H} \mathbf{w} \right) \geq \sigma \mu_u \cdot \sqrt{\Gamma_\mathrm{th}} $ instead of $ \mathrm{C}_3: \mathrm{Tr} \big( \widetilde{\mathbf{H}}_u \mathbf{W} \big) \geq \mu_u \cdot \Gamma_\mathrm{th} $, and we do the same with $ \mathrm{C}_5 $.

\emph{Baseline 3 }({\mytextsf{BL3}}): This approach is based on \gls{SCA}, as in \cite{lyu2024:crb-minimization-ris-aided-mmwave-integrated-sensing-communications}. The obtained solution may not satisfy the set of discrete phases, thus the phases need to be projected. If the SNR constraints are not met after phase projection, $ 10^4 $ randomizations and projection are employed.

\begin{figure}[!t]
	\begin{center}
	\includegraphics[width=0.90\columnwidth]{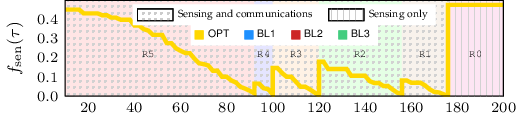}
	\includegraphics[width=0.90\columnwidth]{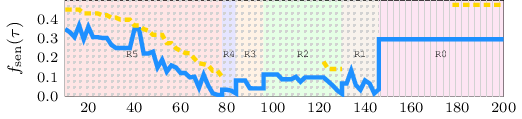}
	\includegraphics[width=0.90\columnwidth]{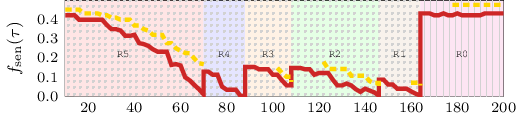}
	\includegraphics[width=0.90\columnwidth]{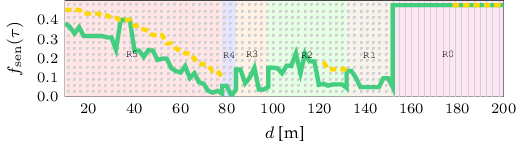}
	\end{center}
	\vspace{-1.5mm}
	\caption{Performance comparison of four different approaches.}	
	\label{fig:scenario-d}
\end{figure}


We demonstrate how the proposed radio resource allocation enables the \gls{ISAC} system to efficiently transition between joint sensing and communication and sensing-only modes as the distance between the BS and users increases. \cref{fig:scenario-d} illustrates six regions \texttt{RX}, where \texttt{X} represents the number of admitted users, e.g., in \texttt{R5}, five users are served, while in \texttt{R0}, only sensing is supported. In this scenario, the distance between the BS and the users is $ d $. As $ d $ increases, fewer users are admitted due to greater path loss, making it harder to meet the \gls{SNR} threshold. Since communications take priority, larger $ \texttt{R1-R5} $ regions indicate a more efficient approach. Notably, {\mytextsf{OPT}} achieves broader regions compared to the baselines, which transition between regions more rapidly, undermining communication prioritization. We plotted {\mytextsf{OPT}} where it intersected with the baselines in the same regions, and report that, within the distance range of $ \left[ 10, 66 \right] $ m, {\mytextsf{OPT}} yields higher values, with average gains of $ 59 \% $, $ 39\% $, and $ 47\% $ over {\mytextsf{BL1}}, {\mytextsf{BL2}}, and {\mytextsf{BL3}}, respectively, further demonstrating its superiority. The average runtime complexities for {\mytextsf{OPT}}, {\mytextsf{BL1}}, {\mytextsf{BL2}}, and {\mytextsf{BL3}}, are $ 0.3867 $, $ 0.2951 $, $ 0.1917 $, $ 0.2435 $s, respectively.


\section{Conclusions} \label{sec:conclusions}


We investigated a novel resource allocation problem in \gls{ISAC} systems, focusing on the joint design of admission control and multicast beamforming with discrete phases, while addressing different priority levels for sensing and communication. Communication was given higher priority, with its performance measured by the number of admitted users, while sensing performance was evaluated using the sensing \gls{SNR}. Our proposed approach achieves globally optimal solutions, outperforming three baseline methods and enabling broader communication operating regions.


\section*{Acknowledgment} \label{sec:acknowledgment}
This research was supported by the German Federal Ministry of Education and Research under ``Project 16KISK035''.

\bibliographystyle{IEEEtran}
\bibliography{IEEEabrv,ref}


\end{document}